\def\ps@headings{%
\def\@oddhead{\mbox{}\scriptsize\rightmark \hfil \thepage}%
\def\@evenhead{\scriptsize\thepage \hfil \leftmark\mbox{}}%
\def\@oddfoot{}%
\def\@evenfoot{}}
\def\beq{\begin{equation}}
\def\eeq{\end{equation}}
\def\beqa{\begin{eqnarray}}
\def\eeqa{\end{eqnarray}}
\def\beqan{\begin{eqnarray*}}
\def\eeqan{\end{eqnarray*}}
\def\argmin{\mathop{\mathrm{arg\,min}}}
\def\argmax{\mathop{\mathrm{arg\,max}}}
\newtheorem{lemma}{Lemma}
\def\rbar{\overline{r}}
\def\rhat{\widehat{r}}
\def\what{\widehat{w}}
\def\zhat{\widehat{z}}
\def\arr{\rightarrow}
\def\tm1{t\! - \! 1}
\def\tp1{t\! + \! 1}
\def\bbf{\mathbf{b}}
\def\rbf{\mathbf{r}}
\def\rbfhat{\widehat{\mathbf{r}}}
\def\rbfbar{\overline{\mathbf{r}}}
\def\wbf{\mathbf{w}}
\def\wbfhat{\widehat{\mathbf{w}}}
\def\zbf{\mathbf{z}}
\def\zbfhat{\widehat{\mathbf{z}}}
\def\Abf{\mathbf{A}}
\def\Gbf{\mathbf{G}}
\def\Sbf{\mathbf{S}}
\def\mubf{{\boldsymbol \mu}}
\def\lambdabf{{\boldsymbol \lambda}}
\def\thetabf{{\bm{\theta}}}
\def\thetabfhat{{\widehat{\bm{\theta}}}}
\begin{document}
%\markboth{Journal of \LaTeX\ Class Files,~Vol.~6, No.~1, January~2007}%
%{Shell \MakeLowercase{\textit{et al.}}: Bare Demo of IEEEtran.cls for Journals}

\title{Opportunistic Third-Party Backhaul for Cellular
Wireless Networks}
\iftoggle{conference}
{ % Conference version

    \author{
        \IEEEauthorblockN{Russell Ford, Changkyu Kim,
        Sundeep Rangan\\}
        \IEEEauthorblockA{Polytechnic Institute of New York University,
        Brooklyn, New York\\
        Email: \{rford02,ckim02,yqi\}@students.poly.edu, srangan@poly.edu}
%        \and
%       \IEEEauthorblockN{S}
%       \IEEEauthorblockA{Polytechnic Institute of\\New York University\\
%       Brooklyn, New York\\
%       Email: srangan@poly.edu}
    }
}{ % Journal version
    \author{
        Russell Ford,~\IEEEmembership{Student~Member,~IEEE},
        Changkyu Kim,~\IEEEmembership{Student~Member,~IEEE},
        Sundeep Rangan,~\IEEEmembership{Senior Member,~IEEE}
        \thanks{This material is based upon work supported by the National Science
        Foundation under Grant No. 1116589.}
        \thanks{R. Ford (email:rford02@students.poly.edu) and
                S. Rangan (email: srangan@poly.edu) are with the
                Polytechnic Institute of New York University, Brooklyn, NY.}
    }
}

\maketitle

\begin{abstract}
With high capacity air interfaces and large numbers of small cells,
\emph{backhaul} -- the wired connectivity to base stations
-- is increasingly becoming the cost driver in cellular wireless networks.
One reason for the high cost of backhaul is that
capacity is often purchased on leased lines with guaranteed rates
provisioned to peak loads.
In this paper, we present an alternate \emph{opportunistic backhaul} model where
third parties provide base stations and backhaul connections and lease
out excess capacity in their networks to the cellular provider
when available, presumably at significantly
lower costs than guaranteed connections.
We describe a
scalable architecture for such deployments using open access \emph{femtocells},
which are small plug-and-play base stations that operate
in the carrier's spectrum but can connect directly
into the third party provider's wired network.
Within the proposed architecture, we present a general
user association optimization algorithm that enables the cellular provider
to dynamically determine which mobiles
should be assigned to the third-party femtocells based on the
traffic demands, interference and channel conditions
and third-party access pricing. Although the optimization is non-convex,
the algorithm uses a computationally efficient method for finding
approximate solutions via dual decomposition.
Simulations of the deployment model
based on actual base station locations are presented that
show that large capacity gains are achievable
if adoption of third-party, open access femtocells can reach even a
small fraction of the current market penetration of WiFi access points.
\end{abstract}

\iftoggle{conference}{}{
    \begin{IEEEkeywords}
    cellular networks, 3GPP LTE, femtocells, access pricing, utility maximization.
    \end{IEEEkeywords}
}

\section{Introduction}

Cellular wireless networks have been traditionally designed on the premise that
the wireless interface is the bottleneck for system throughput and capacity.
However, a surprising recent trend is that \emph{backhaul}, meaning the wired
connectivity to the base stations,
is increasingly becoming the dominant cost
driver in many networks~\cite{backhaul_blog_1,backhaul_blog_2}.
Even for comparatively lower data rate pre-4G systems, backhaul
already accounted for a significant percentage of the operating
costs (30 to 50\% by some estimates \cite{financial_picos,senza_backhaul}).
Higher data rate 4G systems combined with the increasing adoption of a large numbers
of small cell deployments~\cite{financial_picos} will require even greater
costs in the backhaul,
particularly in markets where the operator does not have universal fiber access.

This work presents a novel deployment model for cellular providers that would enable the rising costs of backhaul networks to be mitigated by offloading traffic to third-party backhaul connections.
The basic premise is that backhaul services are currently
purchased with guaranteed service level agreements (SLAs) along dedicated lines \cite{ChiaGB:09}, which come at a significant cost for operators.
These SLAs must generally be provisioned for the \emph{peak} data rates.
However, due to variations in loading and channel conditions, much of the purchased
capacity goes to waste.

We propose that, rather than provisioning these links for the peak demand, cellular networks should be able to dynamically leverage excess capacity on existing backhaul links provided by third-party entities. The role of this third party may be played by other services providers (i.e.\ wireline ISPs) or even broadband customers, the very end-users themselves.

The third parties can provide connectivity to the operator's subscribers through \emph{femtocells}~\cite{ChaAndG:08,LopezVRZ:09,AndrewsCDRC:12}, which
are small, low-cost, cellular base
stations that operate in the provider's spectrum
but are connected into the third party's backhaul.
The network can then offload mobile subscribers
to the third party femtocells and the cellular provider would reimburse the
third party for use of the backhaul resources (and possibly cover the one-time
cost of the femtocell as well).  The key is to offload traffic \emph{opportunistically}
when third parties have excess backhaul capacity.
Since this capacity would only be purchased when used and since mobile traffic
would generally represent only a small increment in average demand
at most enterprises and residences, the opportunistic capacity can presumably
be purchased at much a lower cost than guaranteed lines to base stations.

In addition, significant progress has been made in making femtocells
completely self-organizing with ``plug-and-play" installation
\cite{ZhangRoche:10,SONbook:11},
implying that third-party femtocells would have minimal operating costs.
In this way, opportunistic backhaul with third-party open-access
femtocells can provide a scalable model for high-density,
high-capacity deployments at low cost.

We describe a potential architecture for third-party opportunistic
backhaul model within the LTE/SAE framework
\cite{OlssonSRFM:09} (see Section~\ref{sec:architecture}).
Within this architecture, we consider one of the key technical problems, namely
the optimization of \emph{user association:}
The cellular provider's network
must dynamically assign mobile subscribers between the
operator-controlled base stations and third party femtocells
based on channel and interference conditions, backhaul capacity,
traffic demands and third party access pricing.
We present a general optimization formulation
for this problem using a recently-developed
methodology in \cite{KimFQR:13-arxiv}, which itself was based on \cite{YeRon:12}.
The user association optimization
problem is generally non-convex, but following \cite{KimFQR:13-arxiv},
we show that the optimization admits a dual decomposition that
enables application of efficient approximate augmented Lagrangian
methods.
The methodology is extremely general and
enables joint optimization for load balancing and
interference coordination and can incorporate a large class of interference
models, network topologies and pricing schemes.

To evaluate the potential
capacity increase for this opportunistic backhaul model,
we present a simple simulation
where we assume open access femtocells can be
co-located at a small fraction (between 2 and 25\%)
of the locations where current residential and enterprise WiFi access points
are deployed. Basing our simulations on reported location data of WiFi access points and cellular
base stations and industry standard cellular evaluation models \cite{3GPP36.814},
we show that large capacity gains are possible with offload into
third-party networks and we discuss the potential costs of investment in the third-party model compared to the additional operator-deployed or leased line infrastructure required to support an equivalent gain in system throughput.

\subsection{Related work}

Although femtocells have been traditionally used for
improving coverage in private residences and enterprises
\cite{ChaAndG:08,LopezVRZ:09,AndrewsCDRC:12},
open access femtocells deployed explicitly for wide-area coverage
have also been considered -- see, for example, Qualcomm's
\emph{neighborhood small cells} whitepaper \cite{Qualcomm-NSC}.
That work showed significant capacity gains would be possible
with open-access LTE femtocells placed at a small fraction (10\%)
of residences.  The analysis, however, does not explicitly consider
the issues of backhaul usage and third-party charging, which are
the main focus of this paper.

A comparable business model is also found in public WiFi networks
such as FON, which boasts itself as the ``world's largest WiFi network'' \cite{fon}. FON members, called ``Foneros", agree to allow other Foneros to securely connect to their custom home WiFi AP and, in return, gain access the millions of other hotspots hosted by members of the community. Under one of the available membership plans, users are incentivized to provide reliable WiFi service by being compensated per-byte of data traffic consumed by other Foneros. The FON business model has been tremendously successful in recent years, bringing in 28 million Euro in revenue during 2010~\cite{techcrunch_fon}.  In this work, we consider offload to cellular,
rather than WiFi, which has the advantage of better support for mobility and
interference management.  Moreover, in the model presented, the network
makes all cell selection decisions and is responsible for paying third parties
for offloaded traffic; third-party vs.\ provider ownership is transparent to the mobile
subscribers who see all base stations as belonging to a single network.

A mathematical evaluation of pricing schemes for mobile subscribers is presented in \cite{yun_femto_econ}, which concludes that operators can maximize revenue by offering femto services to all customers at a flat rate, that is, not as an extra value-added service but part of the basic package. We adopt this method of subscriber charging in our model, which we believe has the added benefit of encouraging femtocell adoption. The utility for data consumers is often modeled by a logarithmic function of rate (a special case of iso-elastic utility), which has the rational basis of expressing the decreasing marginal payoff of rate experienced by users~
\cite{KelleyMT:98,dasilva_pricing,elasticity}.

Understanding the shape of the supply-side utility curve for third-party backhaul providers is not as straightforward, however. Intuitively, a monopolist operator wants to offer a price for backhaul capacity that maximizes their net utility, which is a function of the average or total throughput seen by users as well as the cost of connectivity over third-party links. Although we consider the operator to be a price taker under this model, if the third-party provider is a individual end-user with leased broadband capacity, we assume their supply-side price elasticity is likely to be very inelastic since they will accept any price offered for their excess bandwidth (seeing as how they would otherwise get nothing) \cite{elasticity}. This dynamic becomes significantly more complex, requiring a game-theoretic approach to analysis, when we consider a competitive market where one or more broadband ISPs offer resources to one or more mobile service providers and the price of said resources may be a time-varying function of demand.\footnote{Such multi-sided markets are the subject of \cite{multi_provider_bidding}, which introduces a bidding system through which providers compete for network resources.}

In this paper, we forgo an analysis of competitive markets in favor of investigating the relationship between a single cellular provider (with existing macrocellular infrastructure) and many 3rd-party backhaul providers. We assume this baseline model in order to demonstrate a general framework for determining net utility gains and couch an upper bound on the incentive that could be offered while still increasing operator revenue.

\section{System Architecture} \label{sec:architecture}
\begin{center}
\label{sec:cell_selection}
\begin{figure}
\centering
\includegraphics[width=6.0in,trim=1.0in 2.8in 0in 3.0in]{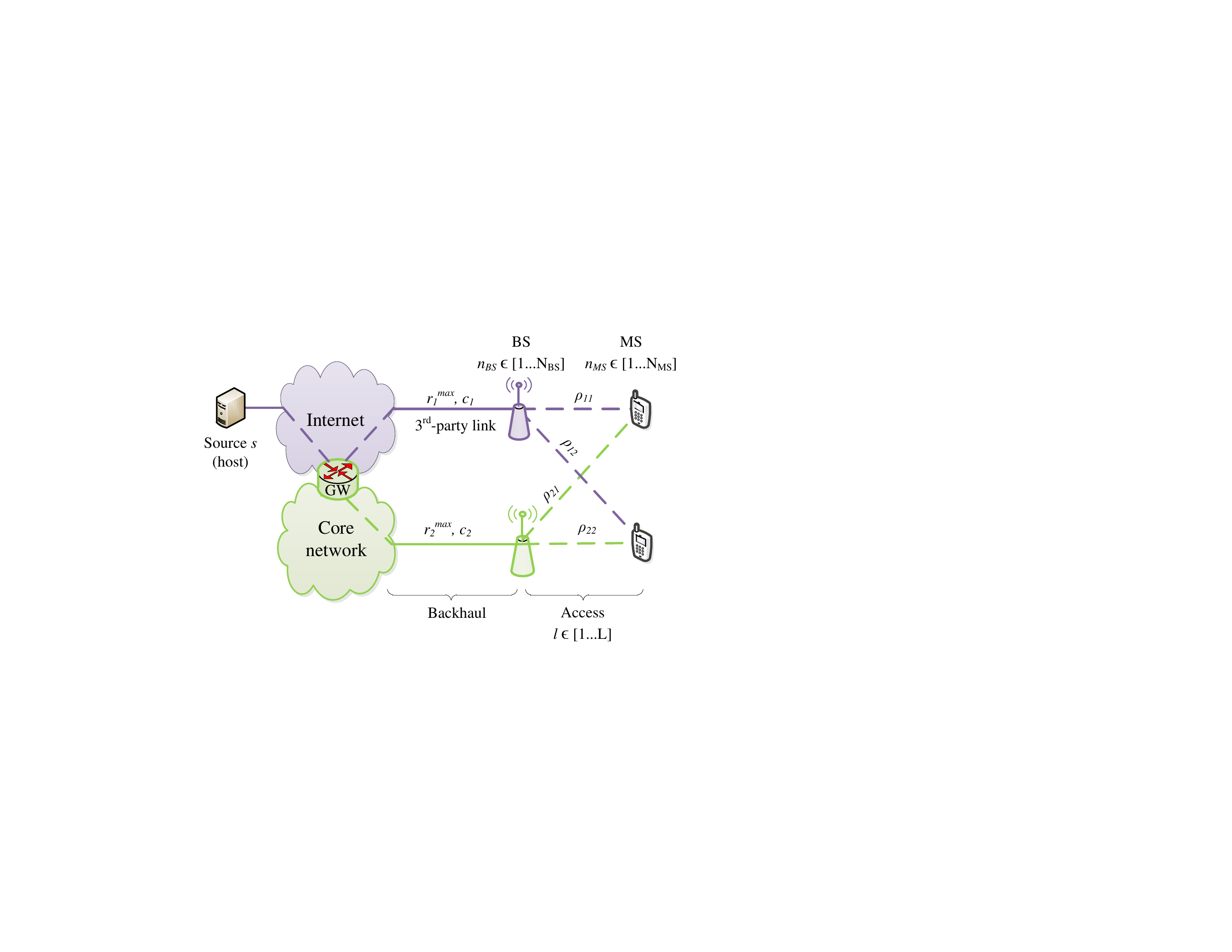}
\caption{Heterogeneous network architecture where operator-controlled
cells (drawn in green) are combined with third-party controlled femtocells
(drawn in purple).   }
\label{fig:cell_selection_net}
\end{figure}
\end{center}

%The problem addressed in this paper is a special case of the multipath routing problem %and closely follows the formulation given in \cite{utility_comm_net}.

As described in the Introduction, we propose that third parties use
open-access femtocells to provide service to the mobile subscribers
of a cellular operator.
The basic network architecture is shown in
Fig.~\ref{fig:cell_selection_net} which follows the standard
model of 3GPP LTE/SAE heterogeneous networks
\cite{OlssonSRFM:09,ZhangRoche:10}.
As shown, there are two classes of base station
cells in the proposed model: \emph{operator-controlled}
and \emph{third-party}.
The operator-controlled cells (set $BS_M$) are the standard
BS nodes connected directly to the operator core network and
managed by the operator.  These are typically macro- or microcells, hence
the subscript $M$.  The third party cells (set $BS_F$),
are the open access femtocell BS nodes installed by a third party and
connected to the third-party ISP network, i.e.\ the Internet.

In the proposed model, the third-party and operator first
agree on some access pricing,
perhaps a cost per unit time or unit data that the operator will reimburse
the third-party provider when mobile subscribers connect to
their network --
we will see that our optimization methodology can incorporate a range of
pricing models.
Then, at any time, the network can choose to assign mobiles to either
operator-controlled or third-party cells
depending on the channel and interference conditions,
traffic loading and access pricing.

There are several convenient features of this deployment model:
\begin{itemize}
\item \emph{Deployment ease and scalability:}
Most importantly, the use of third party femtocells offers
a scalable and cost effective approach to increase capacity:
Since femtocells are low-cost, plug-and-play devices,
third parties can install these devices themselves, thereby immediately
creating an abundance of cell sites virtually for free.  In addition,
as described in the Neighborhood Small Cell concept \cite{Qualcomm-NSC},
when the cellular operator also provides a broadband
residential and enterprise ISP service
(such as Verizon's FiOS or AT\&T's U-Verse service),
they could include a femtocell within the WiFi access point
given to the subscriber, thereby automatically enabling femtocell capabilities
in every subsciber's location.

\item \emph{Direct operator to third party economic relationship:}
The economic interaction with respect to
access pricing can be made entirely between the cellular operator and the
third-party provider -- the mobile subscriber need not be involved.
This arrangement is possible since, in networks such as 3GPP LTE,
for mobiles in connected mode, the network makes all
decisions on which cells serve the mobiles \cite{OlssonSRFM:09}.
Thus, the network and third-party
provider can come to an agreement on access pricing, and then the network
can decide, in each time instant, whether to have its mobiles served by
the third party cells based on channel conditions, network load,
and other factors.

\item \emph{Transparency to mobiles:}
Related to the above point is that,
from the mobile station perspective, both classes of base station cells
-- third-party and operator-controlled -- are
completely identical, potentially using the same radio access technology
and potentially operating in the same spectrum band.\footnote{Our interference
model presented in the following section addresses co-channel and separate channel deployments.}
Thus, the third party vs.\ operator ownership
is transparent to the mobiles.  This feature is beneficial since
the mobile user is really concerned with the quality of the connectivity
and has no interest \emph{per se} in who provides that connection.

\item \emph{Correctly matched incentives:} Since the operator will only reimburse
third parties when Open Subscriber Group (OSG) mobiles are served
 by the third party cell, the operator does not need to enforce proper
installment and operation of the femtocell.  Third parties will be naturally
incentivized to keep their femtocell on and well-positioned to attract
the operator to move its mobile subscribers onto its network to receive
payment.  On the other hand, if the third-party network is busy and cannot
support the additional mobile traffic,
the third-party is free to shut off or
cap the rate to the femtocell and the cellular
operator can adjust its cell selection decisions accordingly.
As we will see in the next section, our user association algorithm
can account for backhaul limits on the third party cells.

\item \emph{Minimal changes to existing standards:}
All the hooks necessary for the proposed third-party offload can be
handled within the existing cellular standards.
For example, in 3GPP LTE,
the mobiles already provide the network with all the measurement reports
necessary to determine the airlink conditions and received signal
strengths to make the handover decisions.
Also, in the current LTE network model,
all traffic from the public Internet is routed first through a
gateway\footnote{The gateway, in this case,
serves the combined function of the LTE P-GW/S-GW.
We consider these to be co-located nodes.}
before being tunneled to the base stations, whether the
base stations cells are operator-controlled or operated by a third-party
outside the operator core network \cite{OlssonSRFM:09}.
\footnote{Note that while protocols such as Selected IP Traffic Offload (SIPTO)~\cite{3GPP23.829}
allow small cell traffic to bypass the CN and be offloaded directly to the Internet,
we do not consider this case here since network and transport layer mobility
functions need to be handled by the CN in any case.}
Thus, the network can both
monitor the exact amount of time and data to each cell type for charging
and to measure link quality.

\end{itemize}

\section{User Association Optimization}
\label{sec:algorithm}

As mentioned above, a key technical challenge in realizing the proposed
architecture is that the cellular provider requires a good algorithm
for \emph{user association}:  the cellular provider must determine,
in each time instant, how to assign
mobile users between the third-party and operator-controlled cells while
accounting for channel and interference conditions, access prices and loading.
To this end, we use a utility maximizing algorithm
proposed in \cite{KimFQR:13-arxiv} and
\cite{YeRon:12} for optimized user association
in heterogeneous networks, incorporating access price into the utility function.

\subsection{Optimization Formulation}

Returning to Fig.~\ref{fig:cell_selection_net}, let MS$i$, $i=1,\ldots N_{MS}$
denote the set of mobile stations and BS$j$, $j=1,\ldots, N_{BS}$
set of base station cells,
the latter set including both operator-controlled and third-party cells.
For each MS$i$, let $\Gamma_{\rm MS}(i)$
denote the indices $j$ such that BS$j$ can potentially serve the mobile.
Similarly,
let $\Gamma_{\rm BS}(j)$ be the set of mobiles potentially served by BS$j$.
Also, let $r_{ij}$ and $w_{ij}$ be the rate and bandwidth allocated
to MS$i$ from BS$j$ for $j \in \Gamma_{\rm MS}(i)$.
Let $\rbar^{MS}_i$ and $\rbar^{BS}_j$
be the total rate to the MS$i$ and BS$j$ respectively,
which must satisfy the constraints
\beq \label{eq:rbar}
    \rbar^{MS}_i \leq \sum_{j \in \Gamma_{\rm MS}(i)} r_{ij}, \quad
    \rbar^{BS}_j \geq \sum_{j \in \Gamma_{\rm BS}(j)} r_{ij}.
\eeq
Absent of carrier aggregation~\cite{YuanZWY:10},
mobiles in LTE are typically only served by one cell at a time.
If we let $\rbf$ be the vector of the rates $r_{ij}$, we will denote
this single path constraint as
\beq \label{eq:singPath}
    \rbf \in \Sbf := \left\{
        \rbf~:~ r_{ij}=0 \mbox{ for all but one } j \in \Gamma_{\rm MS}(i)\right\}.
\eeq
Now, following \cite{KimFQR:13-arxiv}, we attempt to find rates to maximize
some utility function of the form
\beq \label{eq:util}
    U(\rbfbar^{MS}) = \sum_{i=1}^{N_{MS}} U_i(\rbar^{MS}_i),
\eeq
for some utility functions $U_i(\rbar^{MS}_i)$.
To account for the backhaul costs that the operator must pay
the third party providers, we assume there is a cost of the form
\beq \label{eq:cost}
    C(\rbfbar^{BS}) = \sum_{j=1}^{N_{BS}} C_j(\rbar^{BS}_j),
\eeq
where $C_j(\rbar^{BS}_j)$ is the cost for the traffic on BS$j$
that the operator will have to pay the third-party that owns BS$j$.
BS nodes belonging to $BS_M$ represent operator-deployed cells connected over statically-provisioned backhaul links will generally have
zero cost ($C_j(\rbar^{BS}_j)=0$),
since we consider existing infrastructure to be a sunk cost for the operator.
Nodes in $BS_F$ are third-party cells
and have some positive cost ($C_j\rbar^{BS}_j)>0$) that they charge the operator.
The cost functions $C_j(\rbar^{BS}_j)$ can also be used to incorporate
rate limits on either the third-party or operator-deployed cells
due to finite backhaul capacity on those cell sites.

The goal is to maximize a \emph{net utility}, $U(\rbfbar^{MS}) - C(\rbfbar^{BS})$,
subject to constraints on the rates.  The rate limits depend on the
channel and interference conditions, which we model using a linear mixing
interference model proposed in \cite{RanganM:12}.
Let $z_{ij}$ be the interference power on the link from BS$j$ to MS$i$.
As described in \cite{KimFQR:13-arxiv}, assuming that
the base stations radiate a fixed power per unit bandwidth,
the vector of interference powers must satisfy a constraint of the form
\beq \label{eq:zGw}
    \zbf \geq \Gbf\wbf,
\eeq
where $\wbf$ is the vector of bandwidth allocations and $\Gbf$ is an appropriate
gain matrix.
The rate on the links must then satisfy
\beq \label{eq:rateij}
    r_{ij} \leq w_{ij}\rho_{ij}(z_{ij}),
\eeq
where $\rho_{ij}(z_{ij})$ is the spectral efficiency (rate per unit bandwidth)
as a function of the interference level $z_{ij}$.  Also, the bandwidths must satisfy
some constraints of the form
\beq \label{eq:bwcon}
    \sum_{i \in \Gamma_{\rm BS}(i)} w_{ij} \leq \overline{w}_i,
\eeq
where $\overline{w}_i$ is the total bandwidth available in BS$j$.
By appropriate selection of the gain matrix $\Gbf$,
this formulation can incorporate both co-channel deployments of the third-party
and operator-controlled cells where the two types of cells use the same
bandwidth and interfere with one another, or separate channel deployments
where there is no interference.

Now let $\thetabf$ be the vector of all the unknowns
\beq \label{eq:theta}
    \thetabf = (\rbf,\rbfbar^{BS}, \rbfbar^{MS},\wbf,\zbf)^T.
\eeq
The constraints \eqref{eq:rbar}, \eqref{eq:zGw} and \eqref{eq:bwcon}
can be replaced by inequality constraints represented in a matrix form
\beq \label{eq:thetaIneq}
    \Abf\thetabf \leq \bbf,
\eeq
for appropriate choice of the matrix $\Abf$ and vector $\bbf$.
We can then write the user association problem as the optimization
\begin{subequations} \label{eq:optCon}
\beqa
    \lefteqn{ \max_{\thetabf} U(\rbfbar^{MS})-C(\rbfbar^{BS}) }
       \label{eq:opt}  \\
    &s.t.&     \Abf\thetabf \leq \bbf  \label{eq:AtbCon} \\
    & & \rbf \leq \wbf \cdot \rho(\zbf) \label{eq:rwzCon} \\
    & & \rbf \in \Sbf,  \label{eq:rsCon}
\eeqa
\end{subequations}
where the constraint \eqref{eq:rwzCon} is a vector shorthand for the constraints
\eqref{eq:rateij}.

\subsection{Dual Decomposition Algorithm}

The optimization \eqref{eq:optCon} is, in general, non-convex
due to the nonlinearity of the
interference rate-interference constraints \eqref{eq:rwzCon} and the single
path constraints \eqref{eq:singPath}.
However, following \cite{KimFQR:13-arxiv}, we can find an approximate solution
to this problem in two steps.  First, we initially ignore the single path
constraint~\eqref{eq:singPath}.  The resulting optimization will produce a vector
with rates on all paths to the mobiles.  We then simply ``truncate" the solution
to take the path with the largest rate.  This truncation procedure is well-known
in networking theory \cite{PioroMedhi:04} and often introduces little error,
since the multipath solution tends to concentrate on dominant single paths.

Unfortunately, even with the multipath approximation, the optimization
\eqref{eq:optCon} will be non-convex.
However, as in \cite{KimFQR:13-arxiv},
we can show that the optimization admits a separable dual decomposition.
Specifically, the Lagrangian corresponding to the optimization
\eqref{eq:optCon} without the single path constraint \eqref{eq:rsCon}
is given by
\beq \label{eq:Lag}
   L(\thetabf, \mubf) := U(\rbfbar^{MS})  - C(\rbfbar^{BS}) -
    \mubf^Tg(\thetabf),
\eeq
where $g(\thetabf)$ is the vector of constraints
\beq \label{eq:gtheta}
    g(\thetabf) = \Bigl( \Abf\thetabf - \bbf,
                     \rbf - \wbf \cdot \rho(\zbf) \Bigr)^T,
\eeq
and $\mubf \geq 0$ is the vector of dual parameters partitioned
conformably with $g(\thetabf)$
\beq \label{eq:mu}
    \mubf = (\mubf^\theta, \mubf^r)^T.
\eeq

Now central to using dual optimization methods is the ability to
compute the maxima of the form
\beq \label{eq:thetahat}
    \thetabfhat(\mubf)
        := \argmax_{\thetabf}  L(\thetabf,\mubf) - \Phi(\thetabf),
\eeq
for some augmenting function $\Phi(\thetabf)$.  Using a similar argument
as in \cite{KimFQR:13-arxiv}, the following lemma shows that,
under the assumption of
a separable augmenting function $\Phi(\cdot)$, the
optimization \eqref{eq:thetahat} admits a separable dual decomposition.

\begin{lemma} \label{lem:LagSep}
Let $\Phi(\thetabf)$ be any separable function of the form
\beqa
    \lefteqn{ \Phi(\thetabf) = \Phi(\rbf,\rbfbar^{BS}, \rbfbar^{MS},\wbf,\zbf) }
    \nonumber \\
    &=& \sum_{ij} \Bigl[ \phi^r_{ij}(r_{ij}) + \phi^z_{ij}(w_{ij},z_{ij})\Bigr]
    \nonumber \\
    & & + \sum_j \phi^{BS}_j(\rbar^{BS}_j) +
    \sum_i \phi^{MS}_i(\rbar^{MS}_i)
     \label{eq:barrier}
\eeqa
for some functions $\phi^r_{ij}(\cdot)$, $\phi^{BS}_j(\cdot)$,
$\phi^{MS}_i(\cdot)$ and $\phi^{z}_{ij}(\cdot)$.
Let $\mubf$ be any vector of Lagrange parameters and
let $\thetabfhat(\mubf)$ be the corresponding maxima for
the dual optimization augmented by $\Phi(\thetabf)$, namely
\beqa
    \lefteqn{  \thetabfhat(\mubf) =
    (\rbfhat, \rbfhat^{BS},
    \rbfhat^{MS}, \wbfhat, \zbfhat) } \nonumber \\
    &:=& \argmax_{\thetabf} \Bigl[ L(\thetabf,\mubf)
        -  \Phi(\thetabf) \Bigr]. \hspace{2cm}
        \label{eq:thetahatsol}
\eeqa
Then, the components of $\thetabfhat(\mubf)$
are given by the solutions to the optimizations
\begin{subequations}
\beqan
    \rhat_{ij} &\in& \argmin_{r_{ij}} \Bigl[ \phi^r_{ij}(r_{ij}) +
            (\lambda^r_{ij}+\mu^r_{ij})r_{ij} \Bigr] \label{eq:rhatmu} \\
    \rhat^{BS}_{j} &\in& \argmin_{\rbar^{BS}_{j}}
        \Bigl[ C_j(\rbar^{BS}_j) + \phi^{BS}_j(\rbar^{BS}_j)  +
            \lambda^{BS}_j\rbar^{BS}_j \Bigr] \label{eq:rbsmu} \\
    \rhat^{MS}_{i} &\in& \argmax_{\rbar^{MS}_{i}}
        \Bigl[ U_i(\rbar^{MS}_i) - \phi^{MS}_i(\rbar^{MS}_i) -
            \lambda^{MS}_i\rbar^{MS}_i \Bigr] \label{eq:rmsmu}
\eeqan
and
\beqan
    \lefteqn{  (\what_{ij},\zhat_{ij})
    \in \argmax_{w_{ij},z_{ij}} \Bigl[ \mu^r_{ij}w_{ij}\rho_{ij}(z_{ij}) } \nonumber \\
     & & - \lambda^z_{ij}z_{ij}-\lambda^w_{ij}w_{ij}
     - \phi^z_{ij}(w_{ij},z_{ij}) \Bigr], \label{eq:wzhatmu}
\eeqan
\end{subequations}
where the dual parameters $\mubf$ are partitioned as in \eqref{eq:mu}
and the parameters $\lambda$ are the components
\[
    \lambdabf = (\lambdabf^r,\lambdabf^{BS},\lambdabf^{MS},\lambdabf^{w},
    \lambdabf^{z}) := \Abf^T\mubf^\theta,
\]
where the vector $\lambdabf$ has been partitioned conformably with $\thetabf$
in \eqref{eq:theta}.
\end{lemma}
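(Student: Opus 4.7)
The plan is to show that once the Lagrangian is rewritten in terms of the partitioned dual vector $\lambdabf = \Abf^T\mubf^\theta$, the objective in \eqref{eq:thetahatsol} becomes a sum of independent functions, one per variable group in $\thetabf$. Separability of the maximum then follows immediately.

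First I would expand $\mubf^T g(\thetabf)$ using the partitioning of $\mubf$ and $g$:
\[
 \mubf^T g(\thetabf) = (\mubf^\theta)^T(\Abf\thetabf-\bbf) + (\mubf^r)^T\bigl(\rbf - \wbf\cdot\rho(\zbf)\bigr).
\]
The term $(\mubf^\theta)^T\bbf$ is a constant in $\thetabf$ and can be dropped from the argmax. For the bilinear term, I would invoke the identity
\[
 (\mubf^\theta)^T\Abf\thetabf = (\Abf^T\mubf^\theta)^T\thetabf = \lambdabf^T\thetabf = (\lambdabf^r)^T\rbf + (\lambdabf^{BS})^T\rbfbar^{BS} + (\lambdabf^{MS})^T\rbfbar^{MS} + (\lambdabf^w)^T\wbf + (\lambdabf^z)^T\zbf,
\]
using the partition of $\lambdabf$ that matches $\thetabf$ in \eqref{eq:theta}.

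Next I would substitute these expressions, together with the separable forms \eqref{eq:util}, \eqref{eq:cost}, \eqref{eq:barrier}, into $L(\thetabf,\mubf)-\Phi(\thetabf)$ and collect terms by variable. Dropping constants, the augmented Lagrangian becomes
\begin{eqnarray*}
 L(\thetabf,\mubf)-\Phi(\thetabf) &=& \sum_i \Bigl[ U_i(\rbar^{MS}_i) - \lambda^{MS}_i\rbar^{MS}_i - \phi^{MS}_i(\rbar^{MS}_i)\Bigr] \\
 && - \sum_j \Bigl[ C_j(\rbar^{BS}_j) + \lambda^{BS}_j\rbar^{BS}_j + \phi^{BS}_j(\rbar^{BS}_j)\Bigr] \\
 && - \sum_{ij} \Bigl[ (\lambda^r_{ij}+\mu^r_{ij}) r_{ij} + \phi^r_{ij}(r_{ij})\Bigr] \\
 && + \sum_{ij} \Bigl[ \mu^r_{ij}w_{ij}\rho_{ij}(z_{ij}) - \lambda^w_{ij}w_{ij} - \lambda^z_{ij}z_{ij} - \phi^z_{ij}(w_{ij},z_{ij})\Bigr].
\end{eqnarray*}
Since each summand depends on a disjoint slice of $\thetabf$, the joint argmax decouples into the four families of scalar (and pairwise) subproblems listed in the lemma, with the outer signs producing an $\argmin$ in \eqref{eq:rhatmu}--\eqref{eq:rbsmu} and an $\argmax$ in \eqref{eq:rmsmu}, \eqref{eq:wzhatmu}.

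There is no real obstacle beyond bookkeeping: the whole content of the lemma is the identification $\lambdabf = \Abf^T\mubf^\theta$ together with the separable structure of $U$, $C$, and $\Phi$. The only subtlety worth noting is that $(w_{ij},z_{ij})$ must stay coupled in \eqref{eq:wzhatmu} because $\rho_{ij}(z_{ij})$ multiplies $w_{ij}$ in the rate constraint term, and because the augmenting piece $\phi^z_{ij}(w_{ij},z_{ij})$ was allowed to depend jointly on both variables; this is already reflected in the statement.
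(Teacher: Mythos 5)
Your proposal is correct and follows exactly the route the paper takes: the paper's proof is the one-line observation that the result ``follows immediately from the separable structure of the objective function and constraints,'' and your expansion of $\mubf^T g(\thetabf)$ via $\lambdabf = \Abf^T\mubf^\theta$ and the term-by-term regrouping is precisely the bookkeeping that one-liner leaves implicit. The sign conventions (argmin versus argmax) and the coupling of $(w_{ij},z_{ij})$ in the final subproblem all check out against the statement.
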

\begin{proof}
This result follows immediately from the
separable structure of the objective function and constraints.
\end{proof}

\medskip
The lemma shows that the vector optimization \eqref{eq:thetahatsol}
separates into a set of simple one and two-dimensional optimizations
over the components of $\thetabf$ and can thus be computed easily
for any dual parameters $\lambdabf$.
As discussed in \cite{KimFQR:13-arxiv},
this separability property has two key consequences that follow immediately
from standard optimization theory \cite{NorWright:06}:
First, one can use weak duality to efficiently to compute an
upper bound on the maximum net utility.  Specifically,
the net utility is bounded by
\beq \label{eq:dualBnd}
    \max_{\thetabf} U(\thetabf) - C(\thetabf) \leq
        \max_{\thetabf} L(\thetabf,\mubf),
\eeq
for any dual parameters $\mubf \geq 0$.  Moreover, the right-hand side
of \eqref{eq:dualBnd} is convex in $\mubf$, and since the dual maxima
is separable, one can compute its value and gradient for any $\mubf$.
Thus, one can efficiently minimize the right hand side of \eqref{eq:dualBnd},
providing a computable upper bound on the net utility.
Although the dual upper bound may not
be tight (i.e.\ the optimization may not be strongly dual),
the bound provides a computable metric under which any
approximate algorithm can be compared against.

A second consequence of a computable dual maxima is that
one can efficiently implement several well-known augmented Lagrangian techniques
to find approximate maxima to the net utility $U(\thetabf) - C(\thetabf)$.
These methods include various inexact versions of alternating direction method
of multiplier methods \cite{BoydPCPE:09,Zhang:JSC:11} --
see \cite{KimFQR:13-arxiv} for more details.

A valuable property is that this dual decomposition applies
in very general circumstances.  In particular, one can consider arbitrary
cost and utility functions, and SNR-to-rate mappings $\rho_{ij}(\cdot)$.
Thus, the framework provides a tractable and general methodology for a large
class of networks, interference scenarios and pricing schemes with computable
upper bounds on performance.

\section{System Evaluation}

\subsection{Evaluation Methodology}
\label{sec:simulation_setup}

To evaluate the opportunistic backhaul model, we conducted a network simulation
using a simplified version of industry-standard
models for evaluating multi-tier networks \cite{3GPP36.814,FemtoForum:10,ITU-M.2134}.
Our goal was to determine the gain in network capacity
as a function of the number
of third party providers and offloaded backhaul capacity.

Two deployment models were considered:  \emph{real-world}
and \emph{stochastic}.  In the real-world model, the
operator-controlled BS locations
were based on actual cell sites
as reported in the OpenCellId database, which has logged data for over 500,000
individual cells in the US \cite{opencellid}.
For the third-party femtocells,
following \cite{Qualcomm-NSC}, we assume that the third party femtocells can be
co-located with current residential and enterprise WiFi access points
(APs) -- the theory being that the owners of these WiFi
APs would be the potential third parties to offer connectivity to
the mobile subscribers.  The WiFi AP locations were estimated from the
freely-available Wigle.net database.
Contributors to the Wigle project have compiled over 88 million unique observations of WiFi stations across the globe \cite{wigle}.

Our simulations considered two test locations:
a 1km$^2$ area of the East Village in Manhattan, New York City
-- representative of a dense urban deployment -- and a similar-sized area in
Passaic, NJ as representative of a typical suburban deployment.
Fig.~\ref{fig:east_village_map} is an aerial map of WiFi APs and
cell sites in the selected area of Manhattan
as reported in the OpenCellId and Wigle.net databases,
plotted together in Google Earth\texttrademark \cite{google_earth}.
For the observed population of WiFi APs in the 1km$^2$ areas
of NYC and Passaic, NJ,
we uniformly sample a percentage of these nodes in the domain
$\{2\%, 5\%, 10\%, 15\%, 20\%\}$, which represents the
adoption rate of third-party WiFi owners
that agree to provide backhaul service to the cellular operator
through a femtocell co-located at a WiFi AP.

The equivalent number of nodes for both the urban
and suburban locations is given in Table~\ref{table:scenarios}.
We immediately see that, according to the databases we sampled,
there are more than 2000 WiFi APs for each
cellular microcell in Manhattan.  This vast number of WiFi APs suggests
that if, even a small fraction of open-access
femtocells can be co-located at current WiFi locations,
the cellular capacity can be massively increased.

\begin{figure}
\centering
\includegraphics[width=2.8in]{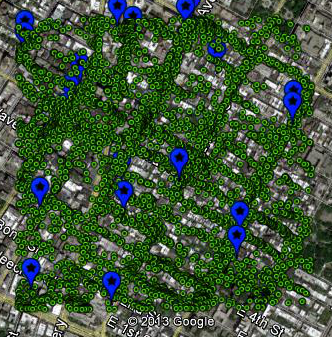}
\caption{Google Earth\texttrademark map of reported microcell locations
(blue balloons) and a fraction of the WiFi sites (green circles)
in the East Village, Manhattan.
Since the WiFi access points vastly outnumber the current cellular base stations
(more that 2000 to 1 in this case), cellular capacity can be significantly
increased if femtocells can be co-located at even a small fraction
of the WiFi sites.  }
\label{fig:east_village_map}
\end{figure}

\begin{table} \footnotesize
\caption{Number of operator-controlled and third-party cells
and number of UEs in the urban and suburban test cases}
\label{table:scenarios}
\centering
\begin{tabular}{ |p{0.2\columnwidth}|p{0.1\columnwidth}|p{0.53\columnwidth}| }
\hline
\textbf{Environment} & \textbf{Node Type} & \textbf{Number of Nodes} \\ \hline
\multirow{3}{*}{\parbox{0.2\columnwidth}{Urban \\ micro-only and \\ micro+femto }} & Micro & $N_M = 17$ \\ \cline{2-3}
& Femto & $N_F\in\{722,1805,3610,5415,7220\}^*$ \\ \cline{2-3}
& UE & $N_{MS} = 425$ (25/micro) \\ \hline
\multirow{3}{*}{\parbox{0.2\columnwidth}{Suburban: \\ macro-only and \\
macro+femto}} & Macro & $N_M = 9$ \\ \cline{2-3}
& Femto & $N_F\in\{66,165,330,495,660,825\}^*$\\ \cline{2-3}
& UE & $N_{MS} = 225$ (25/macro) \\ \hline
\end{tabular}
\linebreak
\linebreak
\footnotesize{*$N_F$ computed from the adoption rate: $\{2\%, 5\%, 10\%, 15\%, 20\%\}$ of observed WiFi APs}
\end{table}

% measured vs standard location densities
%\begin{table}\footnotesize
%\caption{Node density}
%\label{table:node_density}
%\centering
%\begin{tabular}{ |p{0.25\columnwidth}|p{0.12\columnwidth}|p{0.15\columnwidth}|p{0.15\columnwidth}|p{0.3\columnwidth}| }
%\hline
%%& \multicolumn{2}{|c|}{\textbf{Standard}} & \multicolumn{2}{|c|}{\textbf{Observed}} \\
%\textbf{Type} & \textbf{Std ISD (m)} & \textbf{Actual ISD (Avg)} & \textbf{Density (1/m$^2$)} \\ \hline
%Urban micro & 200 & 200.958 & 1.70e-05 \\ \hline
%Urban femto (WiFi co-located) & 60 & 12.159 & 3.61e-02 \\ \hline
%Suburban macro & 500* & 259.769 & 9.00e-06 \\ \hline
%Suburban femto (WiFi co-located) & 60 & 11.020 & 3.30e-03 \\
%\hline
%\end{tabular}
%\linebreak
%\linebreak
%\footnotesize{*See 3GPP Case 1 in \cite{3GPP36.814}.}
%\end{table}

For completeness, we compare our results based on real-world
locations with industry-standard \emph{stochastic models}
\cite{3GPP36.814,ITU-M.2134} used widely in evaluating cellular systems.
For the urban scenario in both the real-world and stochastic models,
we take the operator-deployed nodes to be microcells, which are omnidirectional transmitters with power, bandwidth and other parameters coinciding with the
3GPP urban microcell model in \cite{3GPP36.814,ITU-M.2134}. Femtocells are similarly configured based on these parameters and are subject to a maximum backhaul rate in the set $\{10,20,30,40,50\}$ Mbps.\footnote{Backhaul rate constraint values are based on  broadband services commonly offered by many ISPs in the US.} These values for the rate constraint parameter are assigned uniformly over the set of femtocells. For the suburban scenario (for both real-world and stochastic models),
we consider operator BSs to be three-way-sectorized macrocells.
For the stochastic models, the densities of the
operator-controlled macro/microcells and third-party femtocells
were adjusted to match the densities observed in the real-world data.
Other salient parameters are given in Table~\ref{table:network_params}.
Note that the femtocells have an additional 20 dB of path loss to
account for the wall loss assuming the femtocell is deployed indoors.

For each of the scenarios (urban / suburban and real-world / stochastic),
we follow the standard evaluation methodology in \cite{3GPP36.814,ITU-M.2134}
to assess the downlink capacity.  Specifically, we generate 10 random
instances of each of the networks;  each instance is called a \emph{drop}.
In each random drop, we run the optimization described in Section~\ref{sec:algorithm} to determine the optimal user association
between the third-party and operator-controlled cells.
In reality, we envision that
this optimization would be conducted in the operator's network.

In conducting the optimization, we assume a similar model
as \cite{MogEtAl:07}, where the spectral efficiency
$\rho_{ij}(z_{ij})$ in \eqref{eq:rateij} is given by the Shannon capacity
with a loss of 3~dB
with a maximum value of 4.8 bps/Hz
(corresponding to 64-QAM at rate 5/6).
For the utility \eqref{eq:util},
we assume a proportional fair metric $U_i(\rbar^{MS}_i) = \log(\rbar^{MS}_i)$.
The backhaul capacity is assumed to be zero for operator-controlled cells,
while third-party cells charge a linear cost so that
\beq \label{eq:Cfemto}
    C_j(\rbar^{BS}_j) = \begin{cases}
        p\rbar^{BS}_j, & j \in BS_F \mbox{ (i.e. femtocell)} \\
        0, & j \in BS_M \mbox{ (i.e. macro/microcell)}
        \end{cases}
\eeq
where $p$ represents the
cost per unit backhaul capacity in the femtocell
relative to the utility.
The price $p$ will be varied.

\begin{center}
\begin{table}[h]\footnotesize
\centering
\caption{Network model parameters}
\label{table:network_params}
\begin{tabular}{ |p{0.01\columnwidth}|p{0.24\columnwidth}|p{0.53\columnwidth}| }
\hline
& \textbf{Parameter} & \textbf{Value} \\ \hline
\multirow{6}{*}{\rotatebox[origin=c]{90}{Urban microcell~~~~~}} & Topology & Uniform with wrap-around \\ \cline{2-3}
& Total TX power & 30 dBm \\ \cline{2-3} % 3GPP
& BW & 10 MHz (FDD DL) \\ \cline{2-3} % 3GPP
& Antenna pattern & omni \\ \cline{2-3} % 3GPP
& Micro $\leftrightarrow$ UE path loss & $15.3 + 37.6\log_{10}(R)$ ($R$ in km) \\ \cline{2-3} % 3GPP
& Micro $\leftrightarrow$ UE lognormal shadowing & 10dB std. dev; 50\% inter-site correlation; 100\%
intra-site correlation \\ % 3GPP
\hline

\multirow{7}{*}{\rotatebox[origin=c]{90}{Femtocell~~~~~~~}} & Topology & Uniform with wrap-around \\ \cline{2-3}
& Total TX power & 20 dBm \\ \cline{2-3}
& BW & 10 MHz (FDD DL) \\ \cline{2-3} % 3GPP
& Antenna pattern & omni \\ \cline{2-3}
& Femto $\leftrightarrow$ UE path loss & $15.3 + 37.6\log_{10}(R) + 20dB$ ($R$ in km) \\ \cline{2-3}
& Femto $\leftrightarrow$ UE lognormal shadowing & 8dB std. dev \\ \cline{2-3}
& Backhaul max rate & $r_{max} \in \{10,20,30,40,50\}$ Mbps \newline
uniformly assigned to nodes\\
\hline

\multirow{6}{*}{\rotatebox[origin=c]{90}{Suburban macrocell~~~~~~}}
& Topology & Hexagonal (3-way sectorized) with wrap-around  \\ \cline{2-3}
& Total TX power & 46 dBm \\ \cline{2-3}
& BW & 10 MHz (FDD DL) \\ \cline{2-3} % 3GPP
& Antenna pattern & $ A(\theta) = -min\{12 (\frac{\theta}{\theta_{3dB}})^2,A_m \} $

\\ \cline{2-3}
& Macro $\leftrightarrow$ UE path loss & $15.3 + 37.6log_{10}(R)$ ($R$ in km) \\ \cline{2-3} % 3GPP
& Macro $\leftrightarrow$ UE lognormal shadowing & 8dB std. dev\\ % 3GPP
\hline

\multirow{5}{*}{\rotatebox[origin=l]{90}{Global~~~~~~~~~~~~~}}
& Carrier frequency & 2.1 GHz \\ \cline{2-3}
& Total area & 1000x1000m \\ \cline{2-3}
& UE distribution & Uniform, 25 per macro/microcell \\ \cline{2-3}
& Mobility & constant \\ \cline{2-3}
& Traffic model & full buffer \\ \cline{2-3} % 3GPP
& Fading & none \\ \cline{2-3}
& Link capacity &
$\scriptstyle C = W*min( \log(1+10^{-\frac{1}{10}\beta} SNR), \rho_{max})$

$\beta = 3dB$ (loss from Shannon cap.*)

$\rho_{max}$ (max. spectral efficiency)\\
\hline
\end{tabular}
\newline
\newline
\parbox{0.8\columnwidth}{\footnotesize{*We determine the acheivable rate based on the loss from Shannon capacity, as discussed in \cite{MogEtAl:07}.}}
\end{table}
\end{center}

\subsection{Potential Capacity Gain with Offloading}
\label{sec:capGainMax}

To estimate the maximum possible capacity gain with third-party
offload, we first
consider the case where the third-party providers lease their capacity
at zero cost (i.e.\ $p=0$ in \eqref{eq:Cfemto}).
Fig.~\ref{fig:rate_gain} shows that the gain
in mean throughput per mobile
as a function of the adoption rate -- the fraction of WiFi AP locations
where open-access femtocells are co-located.
We see that the total capacity can be increased significantly.  For example,
a 5\% adoption yields more than 20x increase in user throughput in the
real-world urban setting and a 6x increase in the real-world suburban model.
The maximum gain in the suburban model is not as high as the urban
 setting since the density of third-party cells is lower.
 Also, the maximum gains in both cases begin to saturate since we fix the
 number of UEs per marcros.  Therefore, adding more femtocells
 eventually has little value -- a phenomena also observed in
 \cite{Qualcomm-NSC}.  The finite backhaul rates on the femtocells
 also limits the gain.

\begin{figure}
\begin{subfigure}[t]{1.0\linewidth}
\centering
\includegraphics[width=0.8\columnwidth, trim=1.3in 3.4in 1.3in 3.4in]{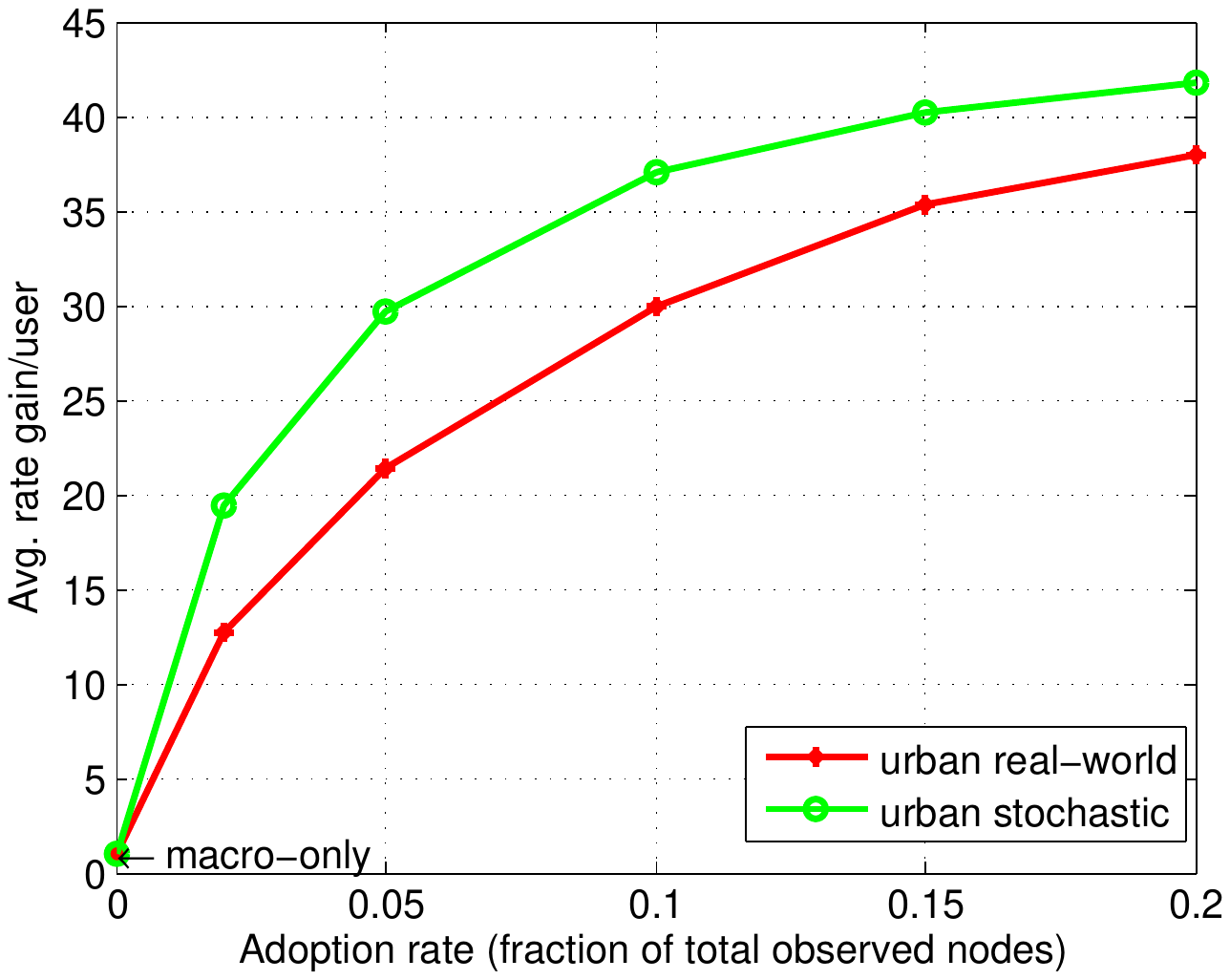}
\caption{Urban scenario}
\label{fig:rate_gain_urban}
\end{subfigure}

\begin{subfigure}[t]{1.0\linewidth}
\centering
\includegraphics[width=0.8\columnwidth, trim=1.3in 3.4in 1.3in 3.4in]{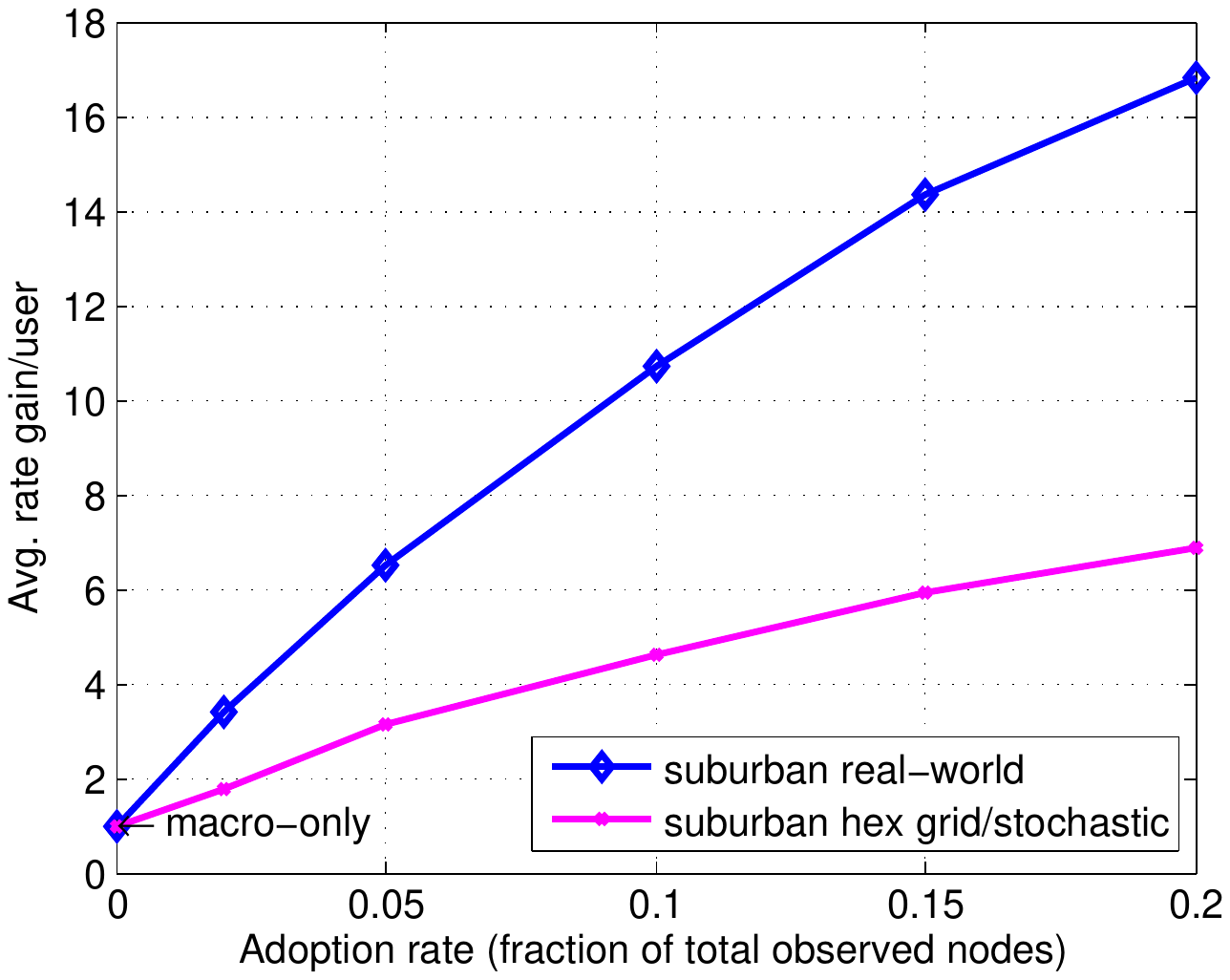}
\caption{Suburban scenario}
\label{fig:rate_gain_suburban}
\end{subfigure}

\caption{Average rate gain as a function of the adoption rate
in urban and suburban deployments.  We see that
 if even a small fraction (5\%) of WiFi owners agree to install open-access
 femtocells, cellular capacity can increase more than 20 fold
 in a dense urban environment and a factor of 6 in a suburban deployment.
 \label{fig:rate_gain} }

\end{figure}

\begin{figure}
\begin{subfigure}[t]{1.0\linewidth}
\centering
\includegraphics[width=0.8\columnwidth, trim=1.3in 3.4in 1.3in 3.4in]{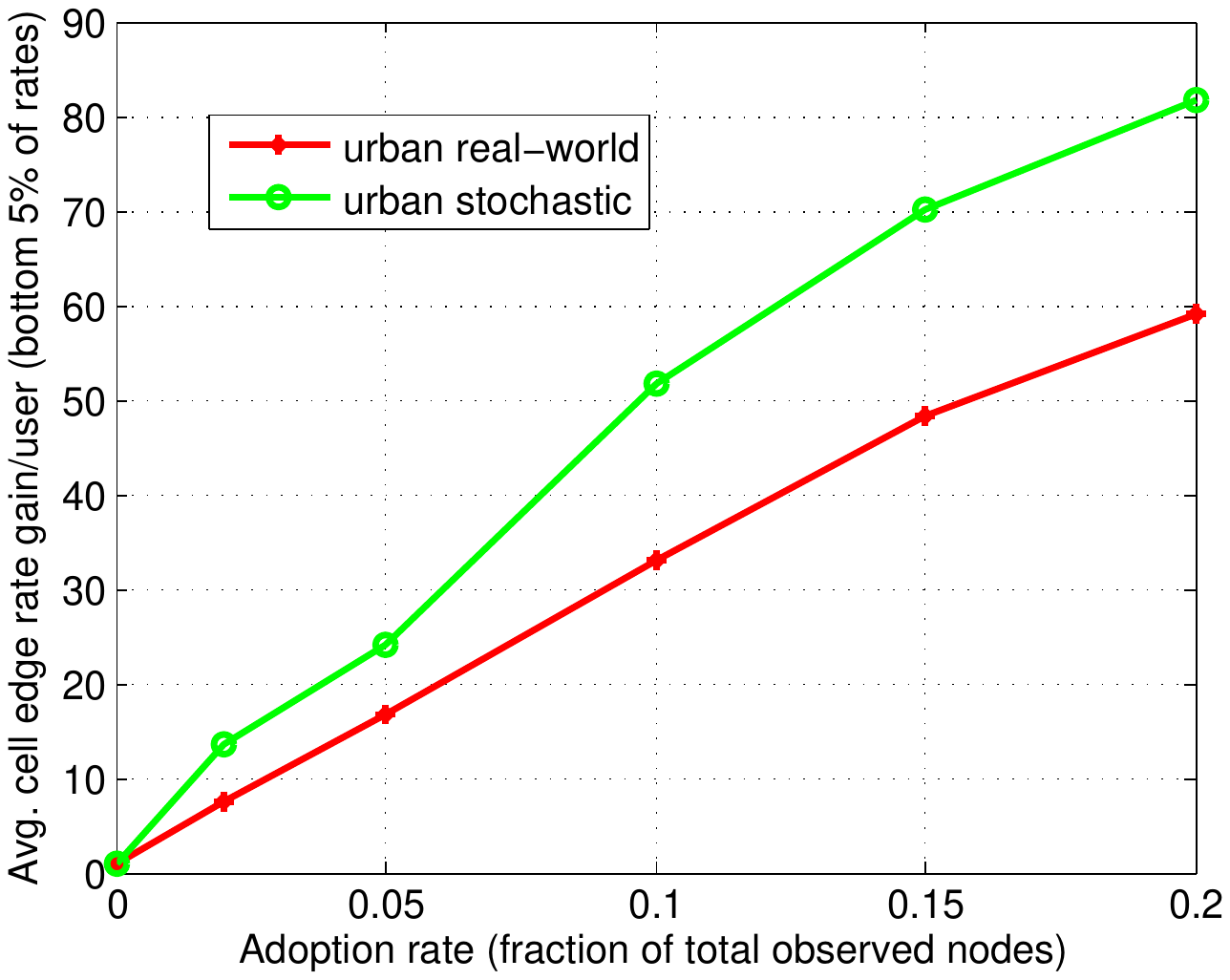}
\caption{Urban scenario}
\label{fig:edge_gain_urban}
\end{subfigure}

\begin{subfigure}[t]{1.0\linewidth}
\centering
\includegraphics[width=0.8\columnwidth, trim=1.3in 3.4in 1.3in 3.4in]{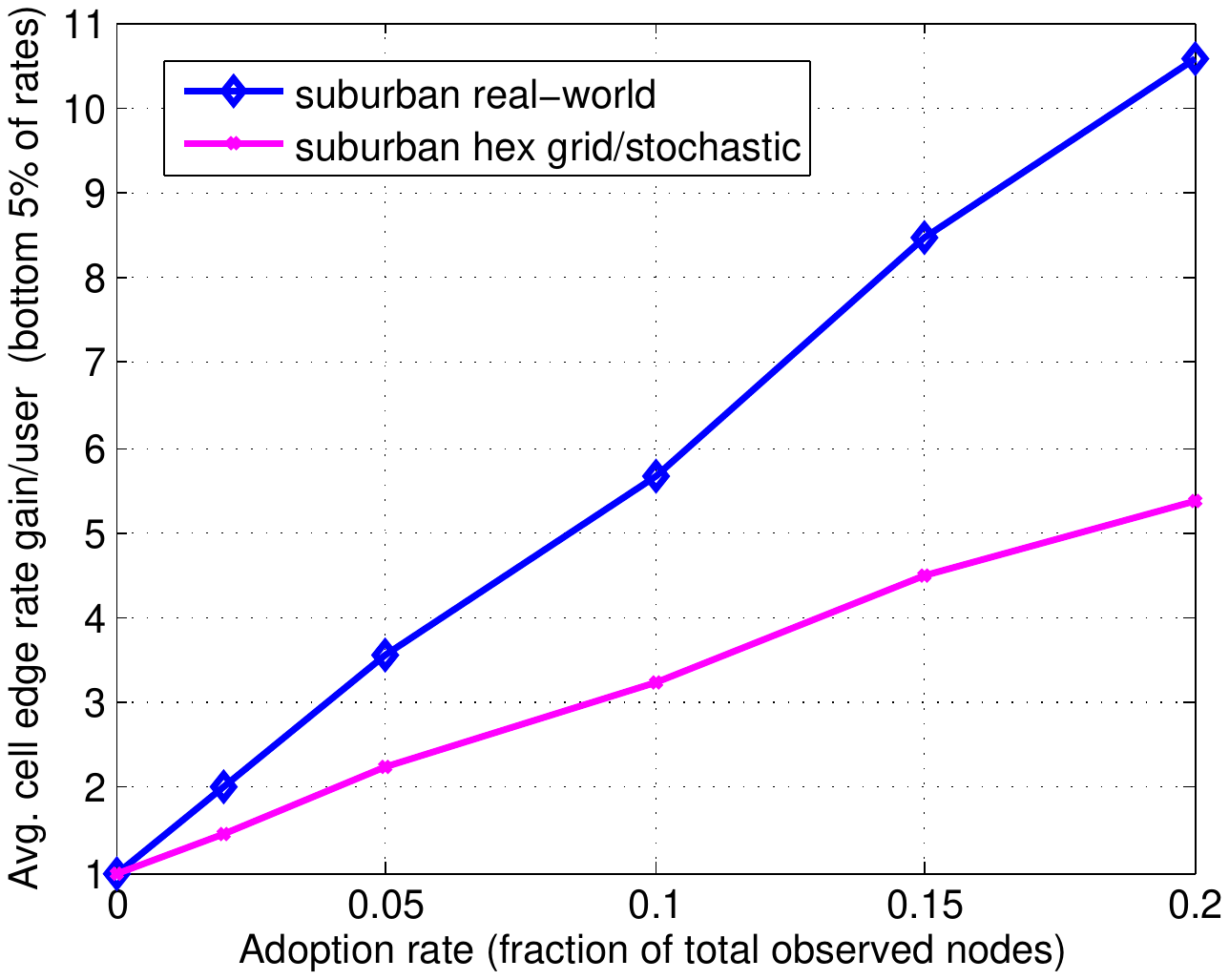}
\caption{Suburban scenario}
\label{fig:edge_gain_suburban}
\end{subfigure}

\caption{Average 5\% cell-edge rate gain as a function of the adoption rate
in urban and suburban deployments.
 \label{fig:cell_edge_gain} }
\end{figure}

Fig.~\ref{fig:cell_edge_gain} similarly plots the increase in
the cell-edge throughput.  As defined in \cite{3GPP36.814},
the cell-edge rate is the rate of the 5\% percentile UE
in each drop.  We see similar gains at the cell edge as the mean gains,
suggesting that the gains are uniformly experienced across the cell.
Table~\ref{table:gains} states the gain value for the urban case
at the 5\% adoption.

\begin{center}
\begin{table}\footnotesize
\caption{Capacity and cell edge gains with 5\% femtocell adoption
based on the urban real-world model.}
\label{table:gains}
\centering
\begin{tabular}{ |>{\raggedright}p{0.22\columnwidth}|p{0.18\columnwidth}|p{0.18\columnwidth}|p{0.1\columnwidth}|}
\hline
\textbf{Scenario} & \textbf{Macro-only} & \textbf{Macro+Femto} & \textbf{Gain} \\ \hline
Avg. UE rate (Mbps) & 0.63 & 13.38 & 21.39 \tabularnewline \hline
Avg. 5\% cell edge rate (Mbps) & 0.08 & 1.36 & 16.71 \tabularnewline \hline
Utility (geometric mean rate, Mbps) & 0.41 & 9.35 & 22.66 \tabularnewline \hline
\end{tabular}
\end{table}
\end{center}
%
%\begin{figure}
%\centering
%\includegraphics[width=0.8\columnwidth,trim=1.3in 3.4in 1.3in 3.4in]{../}
%\caption{Utility vs. Average Additional Backhaul Rate per User (Urban, 5\% Adoption)}
%\label{fig:util_vs_rate_urban}
%\end{figure}

\subsection{Adding Third-Party Pricing}
\label{sec:simPricing}

The results in the previous subsection assumed that
the relative cost of the third-party backhaul
(the variable $p$ in \eqref{eq:Cfemto}) was zero.
Of course, in reality, third parties will not generally offer backhaul
for free, so we need to consider the capacity gain
with non-zero pricing.
Unfortunately, there is no way to directly determine the ``correct"
relative price $p$ to use in the evaluation without some
economic analysis relating
the potential revenue increase to the operator from increased capacity
(as measured by the utility) versus the cost to the operator of the
third-party backhaul.
However, such analysis is beyond the scope of this study, although
business models have been considered elsewhere,
e.g.~\cite{SignalsResearch:09}.

What is relevant for this work is to show that our net
utility maximization optimization can incorporate pricing, whatever the correct
pricing is.  As an illustration,
we fix the adoption rate at 5\% and run the micro+femto optimization
 for different prices $p$ in \eqref{eq:Cfemto}.\footnote{It should be noted that loading price $p$ is unitless (as far as the resource assignment algorithm is concerned) and simply represents the weight of the penalty incurred on net utility.}
Fig.~\ref{fig:cdf_vs_rate} plots the resulting rate distributions across
the UEs for the different prices.
As we would expect, as the price is increased,
the number of users scheduled on third-party links along with the volume of offloaded traffic decreases as a result of the algorithm's penalization of such users. The optimal allocation of resources therefore tends to involve these links less and less.
As $p \arr \infty$, the rate distribution approaches the distribution using
only the operator-controlled microcells.

\begin{figure}
\centering
\includegraphics[width=0.8\columnwidth,trim=1.3in 3.4in 1.3in 3.4in]{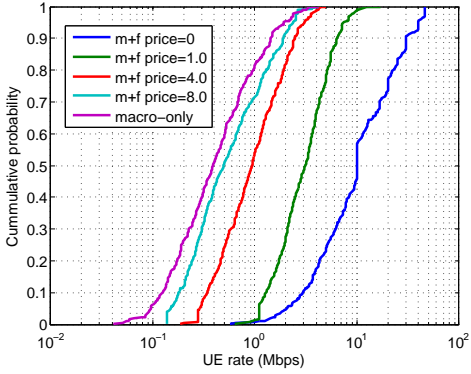}
\caption{Distribution of the UE rates for micro+femto urban real-world
model with femto price
$p \in \{0,1,4,8\}$ and 5\% adoption.  Also, plotted is the micro-only
distribution which corresponds to the price $p=\infty$. }
\label{fig:cdf_vs_rate}
\end{figure}

Also, although we cannot assess the absolute economic value of femtocell offload,
we can conduct the following simple comparison: Suppose the operator
wishes to increase capacity of its network.  We
compare the following two methods:
\begin{itemize}
\item \emph{Increase operator-controlled cells:}
In this method, the operator
does not use any femtocells and increases capacity by adding
operator-controlled micro/macrocells only, i.e.\
traditional \emph{cell splitting}.
To simulate this scenario, we imagine a network with all the parameters
being the same, except that the density of the micro/macrocells is increased
by some factor $\alpha \geq 1$.  Increasing the density in this manner
will incur a variety of costs to the operator including
the capital and operating expenses of the new base stations as well as the
cost of the additional backhaul.
For the moment, we will only consider the additional backhaul costs.
Then, as we vary $\alpha$,
we can estimate the gain in network capacity as a function of the
additional backhaul.

\item \emph{Femto offload:}  As an alternate approach,
we imagine that the operator
adds no new micro/macro base station cells of its own and relies entirely on
purchasing capacity via femto offload.  To simulate this scenario,
we fix an adoption rate at some reasonable value (we assume 5\%), and then
increase network capacity by lowering the relative cost $p$ in \eqref{eq:Cfemto},
from $p=\infty$ (where the operator uses no femtocell offload)
to $p=0$ (where the network
purchases any capacity on femtocells without regard to cost).  Then,
we can again measure the increase in network capacity as a function
of the additional backhaul costs, where the additional backhaul in this case
is on the third-party femtocells.
\end{itemize}

The results of this comparison are shown in Fig. \ref{fig:util_gain_backhaul}.
Plotted is the system utility as a function of the additional backhaul
required for both methods -- adding operator-controlled macro/microcells,
or offloading to femtocells on existing backhaul.  We measure capacity via
the utility. Since we assume a proportional fair utility,
$\sum_i \log(\rbar_i^{MS})$, the utility
is equivalent to the geometric mean rate
$(\prod_i \rbar_i^{MS})^{1/N_{MS}}$.  The geometric mean rate is a better
measure of network capacity than average rate since it penalizes mobiles with
lower rates more significantly. Nevertheless, although it is not plotted,
very similar curves would be observed with either arithmetic mean rate or
cell edge throughput.

\begin{figure}
\begin{subfigure}[t]{1.0\linewidth}
\centering
\includegraphics[width=0.8\columnwidth, trim=1.3in 3.4in 1.3in 3.4in]{fig/util_vs_rate_urban.pdf}
\caption{Urban scenario}
\label{fig:util_vs_rate_urban}
\end{subfigure}

\begin{subfigure}[t]{1.0\linewidth}
\centering
\includegraphics[width=0.8\columnwidth, trim=1.3in 3.4in 1.3in 3.4in]{}
\caption{Suburban scenario}
\label{fig:util_vs_rate_suburban}
\end{subfigure}

\caption{Average utility as a function of additional backhaul,
comparing adding operator-controlled micro/macrocells only vs.\
relying entirely on femtocell offload. For the femtocell offload case,
we assume a 5\% adoption rate and vary the amount of backhaul
used on the femtocells.
 \label{fig:util_gain_backhaul} }

\end{figure}

We see from Fig.~\ref{fig:util_gain_backhaul} that, for the urban
scenario, increasing the utility requires roughly the same amount of
additional backhaul whether deploying more operator-controlled cells
or using femtocell offload.  In the suburban scenario, the femtocell
offload requires significantly more additional backhaul for small increases
in utility, but requires only modestly more additional backhaul
for larger increases in capacity.

Now, as discussed in the Introduction,
it is likely that the backhaul from femtocell offload would be significantly
lower cost than the purchasing leased lines with guaranteed rates
needed for operator-controlled macro/microcells.
Moreover, adding operator-deployed cells incurs additional costs including site
acquisition, infrastructure expenses and network maintenance
\cite{SignalsResearch:09}.  Nevertheless, quantifying the exact savings
would require further economic analysis.

\section*{Conclusions}

%With increased deployment of high-capacity, low-cost, small cells,
%backhaul is rapidly becoming the dominant cost driver in cellular networks.
We have presented a model for operators to offset backhaul costs by
leveraging existing capacity from third-parties.
In the proposed model,
third parties install open-access femtocells in their networks and
the cellular operator can then opt to move subscribers onto third party
cells for a fee.
The problem of dynamically assigning users between the third-party
and operator-controlled cells is formulated as an optimization problem.
A dual decomposition algorithm is presented
that is extremely general and can incorporate channel
and interference conditions, traffic demands,
backhaul capacity and access pricing.
To evaluate the model, we considered deployments where the third party
femtocells were co-located with existing WiFi APs.  Due to the large numbers of
WiFi APs relative to base station cell sites, our simulations suggest that
network capacity be significantly increased even if
only a small fraction (say 5\%) of current WiFi owners deploy open-access femtocells.
The gains are particularly large in dense urban areas where our data suggests
there are some 2000 WiFi APs per operator cell.
%Without further economic analysis such as \cite{SignalsResearch:09},
%it is difficult to put a dollar figure on the cost benefits.
Our optimization can also incorporate a variety of pricing mechanisms
by the third parties, but determining the correct price will need
analysis beyond the scope of this study.
However, our simulations show that, whatever is the correct price,
the additional backhaul to increase
capacity is similar for both adding more operator-controlled cells or offloading
to third-party femtocells.
Thus, assuming third party backhaul can be offered at a lower rate than
leased lines for operator-controlled cells,
the savings of the proposed method can be significant.
In this way, opportunistic backhaul can offer a scalable, low-cost method to
increase network capacity and address the growing demands on cellular
wireless networks.

\bibliographystyle{IEEEtran}
\bibliography{../bibl}

\end{document}